\renewcommand{\geq}{\geqslant}
\renewcommand{\leq}{\leqslant}
\newcommand{\equals}{\stackrel{\mathrm{def}}{=}}
\newtheorem{theorem}{Theorem}
\newtheorem{property}{Property}
\newtheorem{definition}{Definition}
\begin{document}
%
\title{\huge Energy Minimization for Parallel Real-Time Systems with Malleable Jobs and Homogeneous Frequencies}

\author{\IEEEauthorblockN{Nathan Fisher}
\IEEEauthorblockA{Wayne State University\\
fishern@wayne.edu}
\and
\IEEEauthorblockN{Jo\"el Goossens}
\IEEEauthorblockA{Universit\'e Libre de Bruxelles\\
joel.goossens@ulb.ac.be}
\and
\IEEEauthorblockN{Pradeep M. Hettiarachchi}
\IEEEauthorblockA{Wayne State University\\
pradeepmh@wayne.edu}
\and
\IEEEauthorblockN{Antonio Paolillo}
\IEEEauthorblockA{Universit\'e Libre de Bruxelles\\
antonio.paolillo@ulb.ac.be}}


%


\maketitle

\begin{abstract}
In this work, we investigate the potential utility of parallelization for meeting real-time constraints and minimizing energy. We consider malleable Gang scheduling of implicit-deadline sporadic tasks upon multiprocessors. We first show the non-necessity of \emph{dynamic} voltage/frequency regarding optimality of our scheduling problem. We adapt the canonical schedule for DVFS multiprocessor platforms and propose a polynomial-time optimal processor/frequency-selection algorithm.  We evaluate the performance of our algorithm via simulations using parameters obtained from a hardware testbed implementation.  Our algorithm has up to a 60 watt decrease in power consumption over the optimal non-parallel approach.
\end{abstract}


%
\IEEEpeerreviewmaketitle

\section{Introduction}\label{sec:intro}
Power-aware computing is at the forefront of embedded systems research due to market demands for increased battery life in portable devices and decreasing the carbon footprint of embedded systems in general.  The drive to reduce system power consumption has led embedded system designers to increasingly utilize multicore processing architectures.  An oft-repeated benefit of multicore platforms over computationally-equivalent single-core platforms is increased energy efficiency and thermal dissipation~\cite{NVIDIA_Ref}.  For these power benefits to be fully realized, a computer system must possess the ability to parallelize its computational workload across the multiple processing cores.  However, parallel computation often comes at a cost of increasing the total, overall computation that the system must perform due to communication and synchronization overhead of the cooperating parallel processes.  In this paper, we explore the trade-off between parallelization of real-time applications and energy consumption.

\section{Related Work}\label{sec:related}

There are two main models of parallel tasks (i.e., tasks that may use several processors \emph{simultaneously}): the \emph{Gang}~\cite{CCG06b,JRWRTC11,Goossens2010Gang-FTP-schedu,kato2009gang} and the \emph{Thread} model~\cite{CouLupGoo12,5702236, parallel_task-Gill-2011}. With the Gang model, all parallel instances of a same task start and stop using the processors synchronously. On the other hand, with the Thread model, there is no such constraint. Hence, once a thread has been released, it can be executed on the processing platform independently of the execution of the other threads.

Very little research has addressed both real-time parallelization and power-consumption issues~\cite{kong2011energy,cho2007corollaries}.  Furthermore, some basic fundamental questions on the potential utility of parallelization for meeting real-time constraints and minimizing energy have not been addressed at all in the literature.

\section{Models}\label{sec:models}

\subsection{Parallel Job Model}\label{subsec:job-model}

In real-time systems, a job $J_\ell$ is characterized by its \emph{arrival time} $A_\ell$, \emph{execution time} $E_\ell$, and relative deadline $D_\ell$.  The interpretation of these parameters is that the system must schedule $E_\ell$ units of execution on the processing platform in the interval $[A_\ell, A_\ell + D_\ell)$.  Traditionally, most real-time systems research has assumed that the execution of $J_\ell$ must occur sequentially (i.e., $J_\ell$ may not execute concurrently with itself on two --- or more --- different processors).  However, in this paper, we deal with jobs which may be executed on different processors at the very same instant, in which case we say that \emph{job parallelism} is allowed. Various kind of task models exist; Goossens et al.~\cite{Goossens2010Gang-FTP-schedu} adapted parallel terminology~\cite{Buyya99} to recurrent (real-time) tasks as follows.

\begin{definition}[Rigid, Moldable and Malleable Job]
A \emph{job} is said to be (i) \emph{rigid} if the number of processors assigned to this job is specified externally to the scheduler a priori, and does not change throughout its execution;
(ii) \emph{moldable} if the number of processors assigned to this job is determined by the scheduler, and does not change throughout its execution;
(iii) \emph{malleable} if the number of processors assigned to this job can be changed by the scheduler during the job's execution.
\end{definition}

As a starting point for investigating the tradeoff between energy consumption and parallelism in real-time systems, we will work with the malleable job model in this paper.  Schedulability analysis is more complicated for the rigid and moldable job models, and we defer study of these models for future research.

\subsection{Parallel Task Model}\label{subsec:task-model}

In real-time systems, jobs are generated by (recurring) tasks.  One general and popular real-time task model is the \emph{sporadic task model}~\cite{thesisMok} where each sporadic task is characterized by its \emph{worst-case execution time} $e_i$, \emph{task relative deadline} $d_i$, and \emph{minimum inter-arrival time} $p_i$ (also called the task's \emph{period}).  A task $\tau_i$ can generate a (potentially) infinite sequence of jobs $J_1, J_2, \ldots$ such that: 1) $J_1$ may arrive at any time after system start time; 2) successive jobs must be separated by at least $p_i$ time units (i.e., $A_{\ell + 1} \geq A_\ell + p_i$); 3) each job has an execution requirement no larger than the task's worst-case execution time (i.e., $E_\ell \leq e_i$); and 4) each job's relative deadline is equal to the the task relative deadline (i.e., $D_\ell = d_i$).  A useful metric of a task's computational requirement upon the system is \emph{utilization} denoted by $u_i$ and computed by $e_i/p_i$.  A collection of sporadic tasks $\tau \equals \{\tau_1, \tau_2, \ldots, \tau_n\}$ is called a \emph{sporadic task system}.  In this paper, we assume a common subclass of sporadic task systems called \emph{implicit-deadline sporadic task systems} where each $\tau_i \in \tau$ must have relative deadline equal to its period (i.e., $d_i = p_i$).

At the task level, the literature distinguishes between at least two kinds of parallelism:

\begin{compactitem}
\item \emph{Multithread.} Each task is sequence of phases, each phase is composed of several threads, each thread requires a single processor for execution and \emph{can} be scheduled simultaneously~\cite{Nelissen2012Techniques-Opti}. A particular case is the \emph{Fork-Join} task model where task begins as a single master thread that executes sequentially until it encounters the first fork construct, where it splits into multiple parallel threads
which execute the parallelizable part of the computation~\cite{5702236} and so on.
\item \emph{Gang.} Each task corresponds to $e \times k$ rectangle where $e$ is the execution time requirement and $k$ the number of required processors with the restriction the $k$ processors execute task in unison~\cite{kato2009gang}.
\end{compactitem}

\noindent In this paper, we assume malleable Gang task scheduling.

Due to the overhead of communication and synchronization required in parallel processing, there are fundamental limitations on the speedup obtainable by any real-time job.  Assuming that a job $J_{\ell}$ generated by task $\tau_i$  is assigned to $k_\ell$ processors for parallel execution over some $t$-length interval,  the speedup factor obtainable is $\gamma_{i,k_{\ell}}$.  The interpretation of this parameter is that over this $t$-length interval $J_\ell$ will complete $\gamma_{i,k_\ell}\cdot t$ units of execution.   We let $\Gamma_i = (\gamma_{i,0}, \gamma_{i,1}, \ldots, \gamma_{i, m}, \gamma_{i,m+1})$ denote the multiprocessor speedup vector for jobs of task $\tau_i$ (assuming $m$ identical processing cores).  The variables $\gamma_{i,0}$ and $\gamma_{i,m+1}$ are sentinel values used to simplify the algorithm of Section~\ref{sec:algo}; the values of $\gamma_{i,0}$ and $\gamma_{i,m+1}$ are $0$ and $\infty$ respectively.  Throughout the rest of the paper, we will characterize a parallel sporadic task $\tau_i$ by $(e_i, p_i, \Gamma_i)$.

We consider the following two restrictions on the multiprocessor speedup vector:
\begin{compactitem}
\item \emph{Sub-linear speedup ratio~\cite{kong2011energy}}: $1 < \frac{\gamma_{i,j'}}{\gamma_{i,j}} < \frac{j'}{j}$ where $0<j<j'\leq m$.
\item \emph{Work-limited parallelism~\cite{CCG06b}}:
$\gamma_{i,(j'+1)}-\gamma_{i,j'}\leq\gamma_{i,(j+1)}-\gamma_{i,j}$ where $0\leq j<j' < m$.
\end{compactitem}

\noindent  The sub-linear speedup ratio restriction represents the fact that no task can truly achieve an ideal or better than ideal speedup due to the overhead in parallelization.  It also requires that the speedup factor strictly increases with the number of processors.  The work-limited parallelism restriction ensures that the overhead only increases as more processors are used by the job.  These restrictions place realistic bounds on the types of speedups observable by parallel applications.

\subsection{Power/Processor Model}\label{subsec:power-model}

We assume that the parallel sporadic task system $\tau$ executes upon a multiprocessor platform with $m$ identical processing cores.  The processing platform is enabled with both dynamic power management (DPM) and dynamic voltage and frequency scaling (DVFS) capabilities.  With respect to DPM capabilities, we assume the the processing platform has the ability to turn off any number of cores between 0 and $m-1$.  For DVFS capabilities, in this work, we assume that there is a system-wide homogeneous frequency $f >0$ which indicates the frequency at which all cores are executing at any given moment.  The power function $P(f, k)$ indicates the power dissipation rate of the processing platform when executing with $k$ active cores at a frequency of $f$.  We only assume that $P(f,k)$ is a non-decreasing, convex function.  While we consider the setting where the system may dynamically change frequency without penalty, we consider that there is significant overhead to turning a core on or off.  Therefore, in this paper, we will only consider core speed/activation assignment schemes where the number of active cores is decided prior to system runtime and does not change dynamically.

The interpretation of the frequency is that if $\tau_i$ is executing job $J_\ell$ on $k_\ell$ processors at frequency $f$ over a $t$-length interval then it will have executed $t\cdot \gamma_{i,k_\ell} \cdot f$ units of computation.  The total energy consumed
by executing $k$ cores over the $t$-length at frequency $f$ is $t\cdot P(f,k)$.

\subsection{Scheduling Algorithm}\label{subsec:scheduler}

In this paper, we use a scheduling algorithm originally developed for non-power-aware parallel real-time systems called the \emph{canonical parallel schedule}~\cite{CCG06b}.  The canonical scheduling approach is optimal for implicit-deadline sporadic real-time tasks with work-limited parallelism and sub-linear speedup ratio upon an identical multiprocessor platform (i.e., each processor has identical processing capabilities and speed).  In this paper, we consider also an identical multiprocessor platform, but permit both the number of active processors
and homogeneous frequency $f$ for all active processors to be chosen prior to system runtime.  In this subsection, we briefly define the canonical scheduling approach with respect to our power-aware setting.

Assuming the processor frequencies are identical and a fixed value $f$, it can be noticed that a task $\tau_{i}$ requires more than $k$ processors simultaneously if $u_{i}>\gamma_{i,k}\cdot f$; for the unitary frequency, we denote by $k_{i}$ the largest such $k$ (meaning that $k_{i}$ is the smallest number of processor[s] such that the task $\tau_{i}$ is schedulable on $k_{i}+1$ processors at frequency $f = 1$):
\begin{equation}\label{eq:smalNbrProc-orig}
  k_{i} \equals
\begin{cases}
0 & \text{if $u_{i}\leq \gamma_{i,1}$}\\
\max_{k=1}^{m} \{k \mid \gamma_{i,k} < u_i\} &
\text{otherwise.}
\end{cases}
\end{equation}

For example, let us consider the task system $\tau=\{\tau_1,
\tau_2 \}$ to be scheduled on three processors with $f=1$. We have $\tau_1=(6,4,\Gamma_1)$ with $\Gamma_1=(1.0, 1.5, 2.0)$ and $\tau_2=(3,4,\Gamma_2)$ with $\Gamma_2=(1.0, 1.2, 1.3)$. Notice that the system is infeasible at this frequency if job parallelism is not allowed since $\tau_1$ will never meet its deadline unless it is scheduled on at least two processors (i.e., $k_{1}=1$). There is a feasible schedule if the task $\tau_1$ is
scheduled on two processors and $\tau_2$ on a third one (i.e., $k_{2}=0$).

\paragraph*{The canonical schedule} That scheduler assigns $k_{i}$ processor(s) permanently to $\tau_{i}$ and an additional processor sporadically (see~\cite{CCG06b} for details). In this work we will extend that technique for dynamic voltage and frequency scaling (DVFS) and dynamic power management (DPM) capabilities.

\section{Non-Necessity of DVFS for Malleable Jobs}\label{sec:non-necessity}

\begin{property}\label{prop:dvfs_notnecessary}
In a multiprocessor system with global homogeneous frequency in a continuous range, choosing dynamically the frequency is not necessary for optimality in terms of consumed energy.
\end{property}

\begin{proof}
\cite{Ishihara1998Voltage-schedul} presented similar result, here we prove the property for our framework.
Although we have a proof of this property for any convex form of $P(v)$ ($v$ is the voltage chosen, directly linked to the resulting frequency of the system), for space limitation in the following, we will consider that $P(v) \propto v^3$.
Assuming we have a schedule at the constant speed/voltage $v$ on the (multiprocessor) platform we will show that any dynamic frequency schedule (which schedules the same amount of work) consumes not less energy. First notice that from any dynamic frequency schedule we can obtain a constant frequency schedule (which schedules the same amount of work) by applying, sequentially, the following transformation: given a dynamic frequency schedule in the interval $[a,b]$ which works at voltage $v_{1}$ in $[a,\ell)$ and at voltage $v_{2}$ in $[\ell,b]$ we can define the constant voltage such that at that speed/voltage the amount of work is identical.

Without loss of generality we will consider the constant voltage schedule the interval $[0,1]$ working at voltage $v$ and the dynamic schedule working at voltage $v+\Delta$ in $[0,\ell)$ and at the voltage $v-\Delta'$ in $[\ell,1]$.

Since the transformation must preserve the amount of work completed we must have:
\begin{align}
& v = \ell(v+\Delta) + (1-\ell)(v-\Delta') \notag \\
\Leftrightarrow \; & \Delta' \equals \frac{\ell  \Delta}{1-\ell} \label{equation:delta}
\end{align}
since the extra work in $[0,\ell)$ (i.e., $\Delta \ell$) must be equal to the spare work in $[\ell,1]$ (i.e., $\Delta' (1-\ell)$).

Now we will compare the \emph{relative} energy consumed by both the schedules, i.e., we will show that

\begin{equation}\label{equation:power}
\ell (v+\Delta)^3 + (1-\ell)(v-\frac{\ell  \Delta}{1-\ell})
^3 \geq v^3
\end{equation}

We know that $\ell (v+\Delta)^3 = \ell (v^3+3v^{2}\Delta+3v\Delta^{2}+\Delta^{3})$ and $(1-\ell)(v-\frac{\ell  \Delta}{1-\ell})^3 = (1-\ell)(v^{3}-3v^{2}\frac{\ell  \Delta}{1-\ell} + 3v\frac{\ell^{2}  \Delta^{2}}{(1-\ell)^{2}} - \frac{\ell^3  \Delta^3}{(1-\ell)^3})$.

(\ref{equation:power}) is equivalent to (by subtracting $v^{3}$ on the both sides)

$$\Delta \left[3\ell v \Delta + \ell \Delta^{2} + 3v\frac{\ell^{2}  \Delta}{(1-\ell)} - \frac{\ell^3  \Delta^2}{(1-\ell)^2}\right]\geq0$$

Or equivalently (dividing by $\ell\Delta$):

$$3\Delta v + \Delta^{2} + 3v\frac{\ell  \Delta}{(1-\ell)} - \frac{\ell^2  \Delta^2}{(1-\ell)^2}\geq0$$

$\Leftarrow$ ($v - \Delta' > 0$ and, by~(\ref{equation:delta}))
$$3\Delta v + \Delta^{2} + 3 \frac{\ell\Delta}{1-\ell} \frac{\ell  \Delta}{(1-\ell)} - \frac{\ell^2  \Delta^2}{(1-\ell)^2}\geq0$$

$\Leftarrow$
$$3\Delta v + \Delta^{2} + 2 \frac{\ell^2  \Delta^2}{(1-\ell)^2}\geq0$$

\noindent which always holds because $\Delta > 0$ and $v > 0$.
\end{proof}

\section{Optimal Processor/Frequency-Selection Algorithm}\label{sec:algo}
Property~\ref{prop:dvfs_notnecessary} implies, for homogeneous frequency upon the different processing cores, that for each DVFS scheduling, it exists a constant frequency scheduling which consumes no more energy. Thus, the frequency that minimizes consumed energy can be computed prior the execution of the system. So, in the following, we will design an offline algorithm to find this optimal minimal frequency. This parameter will allow us to use the canonical schedule~\cite{CCG06b} to find a scheduling of the system. First, we will present the feasibility criteria adapted to variable homogeneous frequency. After that we will use this criteria to determine constraints on the frequency for the system to be feasible on a fixed number of processors. After that, we will present an algorithm which uses those constraints to compute the exact optimal frequency for the system to be feasible. Finally, we will prove the correctness of this algorithm.

In the following we denote by $f$ the frequency of our multiprocessor platform. Notice that we made the hypothesis that time is continuous (as in~\cite{CCG06b}). More specifically, we can also choose the frequency in the positive continuous range ($f \in \mathbb{R}^+_0$).

\subsection{Background}\label{subsec:algo-bkgrd}
Notice that a task $\tau_{i}$ requires more than $k$
processors simultaneously if $u_{i}>\gamma_{i,k} \times f$; we denote
by $k_{i}(f)$ the largest such $k$ (meaning that $k_{i}(f)$ is the
smallest number of processor[s] such that the task
$\tau_{i}$ is schedulable on $k_{i}(f)+1$ processors):

\begin{equation}\label{eq:smalNbrProc}
  k_{i}(f) \equals
\begin{cases}
0, & \text{if $u_{i}\le \gamma_{i,1} \times f$}\\
\max_{k=1}^{m} \{k \mid \gamma_{i,k}\times f < u_i\}, &
\text{otherwise.}
\end{cases}
\end{equation}

This definition extends the one of $k_i$, Eq.~(\ref{eq:smalNbrProc-orig}). Notice that we have $k_{i} = k_{i}(1)$.
For a given number of processors $\kappa \in \{0, \ldots, m-1, m\}$, we wish to determine the range of frequencies $[f_1, f_2)$ such that $k_i(f) = \kappa$ for all $f \in [f_1, f_2)$.  We denote as the inverse function

\begin{equation}\label{eqn:inverse-smalNbrProc}
k_i^{-1}(\kappa) \equals
\begin{cases}
\{f~|~ \frac{u_i}{\gamma_{i,\kappa+1}} \leq f < \frac{u_i}{\gamma_{i,\kappa}}\} & \text{if $0 < \kappa \leq m$}\\
[\frac{u_i}{\gamma_{i,1}}, \infty) & \text{otherwise.}
\end{cases}
\end{equation}

\noindent We denote the left endpoint (resp., right endpoint) of $k^{-1}_i(\kappa)$ as $k^{-1}_i(\kappa).f_1$ (resp., $k^{-1}_i(\kappa).f_2$).

\subsection{Feasibility criteria with variable homogeneous frequency}\label{subs:feasibility}

We will now present a necessary and sufficient condition for the feasibility of a task system $\tau$ on $m$ identical processors at frequency $f > 0$.

\begin{theorem}\label{theo:tauprime}
A sporadic task system $\tau \equals \lbrace \tau_1, \tau_2, \ldots, \tau_n \rbrace$ is feasible on an identical platform with $m$ processing cores at frequency $f > 0$ if and only if the task system $\tau^\prime \equals \lbrace \tau^\prime_1, \tau^\prime_2, \ldots, \tau^\prime_n \rbrace$ is feasible on the same system with $m$ processing cores at frequency $1$.

$\tau^\prime$ is defined as follow:
\begin{align*}
&\forall \: 1 \leq i \leq n : \tau^\prime_i = (e_i, p_i,\Gamma^\prime_i) \\
&\Gamma^\prime_i  = (\gamma^\prime_{i,0}, \gamma^\prime_{i,1}, \ldots, \gamma^\prime_{i, m}, \gamma^\prime_{i,m+1}) \\
&\forall \: 0 \leq k \leq m+1 : \; \gamma^\prime_{i, k} \equals \gamma_{i, k} \times f.
\end{align*}

\end{theorem}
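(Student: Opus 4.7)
The plan is to prove both directions by exhibiting the same schedule (i.e., the same mapping from time to processor assignments for each job) and observing that the amount of work completed by any job over any interval is numerically identical in the two settings. The central identity is: for any job $J_\ell$ of $\tau_i$ assigned to $k_\ell$ processors over a $t$-length interval, the completed work in the original system at frequency $f$ is $t \cdot \gamma_{i,k_\ell} \cdot f$ (by the interpretation given in Section~\ref{subsec:power-model}), and in the transformed system at frequency $1$ it is $t \cdot \gamma^\prime_{i,k_\ell} \cdot 1 = t \cdot (\gamma_{i,k_\ell}\cdot f) \cdot 1$, which are equal by construction of $\Gamma^\prime_i$.

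For the forward direction, I would take any feasible schedule $\mathcal{S}$ for $\tau$ at frequency $f$ and interpret precisely the same processor-assignment function as a schedule $\mathcal{S}^\prime$ for $\tau^\prime$ at frequency $1$. Since $e_i$, $p_i$ (hence $d_i$) are identical in $\tau$ and $\tau^\prime$, the release times, deadlines, and execution requirements of the jobs are unchanged; by the work-identity above, the total work completed on each job by each deadline is the same, so $\mathcal{S}^\prime$ meets every deadline. The reverse direction is symmetric: given a feasible $\mathcal{S}^\prime$ for $\tau^\prime$ at frequency $1$, the identical assignment is feasible for $\tau$ at frequency $f$.

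I would then verify, as a sanity check, that $\tau^\prime$ is a well-formed instance of the model of Section~\ref{subsec:task-model}: the sentinels satisfy $\gamma^\prime_{i,0}=0$ and $\gamma^\prime_{i,m+1}=\infty$; sub-linearity is preserved because $\gamma^\prime_{i,j^\prime}/\gamma^\prime_{i,j} = \gamma_{i,j^\prime}/\gamma_{i,j}$; and work-limited parallelism is preserved because $\gamma^\prime_{i,j+1}-\gamma^\prime_{i,j} = f(\gamma_{i,j+1}-\gamma_{i,j})$ and multiplication by the positive constant $f$ preserves the monotonicity of successive differences. This ensures that feasibility statements phrased within the model (including, if needed, appealing to canonical-schedule optimality) transfer cleanly.

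The proof is essentially a scaling argument, so I do not expect a genuine obstacle; the only subtlety to get right is that the interpretation of a ``schedule'' must be purely in terms of the processor-assignment function over time, independent of frequency, so that ``the same schedule'' is a meaningful object in both settings. Once this is stated carefully, the work-identity and the invariance of $(e_i, p_i)$ collapse the two feasibility questions to the same combinatorial condition.
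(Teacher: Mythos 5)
Your proposal is correct and follows essentially the same argument as the paper: reuse the identical processor-assignment schedule in both settings and observe that the work completed per interval coincides because $t\cdot\gamma^\prime_{i,k}\cdot 1 = t\cdot\gamma_{i,k}\cdot f$, while also noting that sub-linear speedup ratio and work-limited parallelism are preserved under scaling by $f$. Your write-up is in fact slightly more explicit than the paper's about why the model properties transfer and about what ``the same schedule'' means, but there is no substantive difference.
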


\begin{proof}
First of all, it is easy to see that $\tau$ respects \emph{sub-linear speedup ratio} and \emph{work-limited parallelism} if and only if $\tau^\prime$ respects them also.

We know that if $\tau_i$ is executing a job on $k$ processors at frequency $f$ over a $t$-length interval then it will have executed $t\cdot \gamma_{i,k} \cdot f$ units of computation. For the same interval, $\tau^\prime_i$, at frequency $1$, is executing $t\cdot \gamma^\prime_{i,k} \cdot 1 = t\cdot \gamma_{i,k} \cdot f$ units of computation. The amount of work executed per unit of time is exactly the same for every task of both systems. So if there exists one schedule without any deadline miss for one of the two systems, we can use the same one to schedule the other system. Thus, we can conclude that $\tau$ is feasible if and only if $\tau^\prime$ is feasible.
\end{proof}

\begin{theorem}
A necessary and sufficient condition for a sporadic task system $\tau$ (respecting \emph{sub-linear speedup ratio} and \emph{work-limited parallelism}) to be feasible on $m$ processors at frequency $f$ is given by:

\begin{equation}\label{eqn:sched-cond}
m \geq \sum_{i=1}^{n}
\left(k_{i}(f)+\frac{u_i-\gamma_{i,k_{i}(f)}\times f}{(\gamma_{i,k_{i}(f)+1}-
\gamma_{i,k_{i}(f)})\times f} \right)~.
\end{equation}
\end{theorem}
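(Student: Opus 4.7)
The plan is to reduce the feasibility question at an arbitrary frequency $f > 0$ to the unit-frequency case by invoking Theorem~\ref{theo:tauprime}, and then apply the already-known characterization of canonical-schedule feasibility from~\cite{CCG06b}. In that prior work, the corresponding necessary and sufficient condition has been established at unit frequency, namely
$$m \geq \sum_{i=1}^{n} \left( k'_i + \frac{u_i - \gamma'_{i,k'_i}}{\gamma'_{i,k'_i+1} - \gamma'_{i,k'_i}} \right),$$
where $k'_i$ is defined for $\tau'$ by Eq.~(\ref{eq:smalNbrProc-orig}) (at $f=1$). Since Theorem~\ref{theo:tauprime} already establishes that $\tau$ on $m$ cores at frequency $f$ is feasible iff $\tau'$ is feasible on $m$ cores at frequency $1$, and since its proof has also noted that the sub-linear speedup ratio and work-limited parallelism hypotheses are preserved under the scaling $\gamma'_{i,k} = \gamma_{i,k}\cdot f$, the condition above can legitimately be applied to $\tau'$.

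The remainder is a direct algebraic identification. First I would observe that $k'_i = k_i(f)$: indeed, using the definitions,
$$k'_i = \max_{k=1}^{m} \{k \mid \gamma'_{i,k} < u_i\} = \max_{k=1}^{m}\{k \mid \gamma_{i,k} \cdot f < u_i\} = k_i(f),$$
with the edge case $u_i \leq \gamma_{i,1}\cdot f$ (equivalently $u_i \leq \gamma'_{i,1}$) forcing both $k'_i = 0$ and $k_i(f) = 0$. Substituting $\gamma'_{i,k'_i} = \gamma_{i,k_i(f)}\cdot f$ and $\gamma'_{i,k'_i+1} = \gamma_{i,k_i(f)+1}\cdot f$ into the numerator and denominator of the fraction then produces precisely the summand appearing on the right-hand side of Eq.~(\ref{eqn:sched-cond}), completing the equivalence.

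The main obstacle is nothing more than careful bookkeeping around the two boundary indices $k=0$ and $k=m$, where the sentinel values $\gamma_{i,0}=0$ and $\gamma_{i,m+1}=\infty$ must scale consistently under $\gamma'_{i,k} = \gamma_{i,k}\cdot f$; in particular one must verify that $\gamma_{i,0}\cdot f = 0$ remains a valid zero sentinel and that $\gamma_{i,m+1}\cdot f$ still behaves as $+\infty$ inside the defining $\max$ of $k_i(f)$. Apart from this routine check, no new scheduling-theoretic argument is needed: the theorem follows by composing Theorem~\ref{theo:tauprime} with the unit-frequency canonical-schedule feasibility result of~\cite{CCG06b}.
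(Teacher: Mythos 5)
Your proposal is correct and follows essentially the same route as the paper's own proof: reduce to the unit-frequency system $\tau'$ via Theorem~\ref{theo:tauprime}, invoke the feasibility condition of~\cite{CCG06b} for $\tau'$, identify $k'_i = k_i(f)$, and substitute $\gamma'_{i,k} = \gamma_{i,k}\cdot f$ to recover~(\ref{eqn:sched-cond}). The only difference is your explicit remark about checking the sentinel values $\gamma_{i,0}$ and $\gamma_{i,m+1}$ under scaling, which the paper leaves implicit but which is a harmless (and reasonable) extra verification.
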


\begin{proof}
By Theorem~\ref{theo:tauprime}, we know that $\tau$ is feasible at frequency $f$ on $m$ processing cores if and only if $\tau^\prime$ is feasible at frequency $1$. In~\cite{CCG06b}, there is a necessary and sufficient feasibility condition for any sporadic task system (work-limited and sub-linear speedup ratio) for fixed frequency ($f = 1$). This result can be used to establish the schedulability of $\tau^\prime$.

So, using the result given by~\cite{CCG06b}, we know that $\tau^\prime$ is feasible if and only if this inequation holds:

\begin{equation}\label{eqn:sched-condPrevPaper}
m \geq \sum_{i=1}^{n}
\left(k^\prime_{i}+\frac{u_i-\gamma^\prime_{i,k^\prime_{i}}}{\gamma^\prime_{i,k^\prime_{i}+1}-
\gamma^\prime_{i,k^\prime_{i}}} \right)~,
\end{equation}

where $k^\prime_i$ denotes the value of $k_i$ (\emph{cf.} definition given by~(\ref{eq:smalNbrProc})) calculated for the system $\tau^\prime$ at frequency $1$. $\forall \: 1 \leq i \leq n :$

\begin{equation*}
k^\prime_i = k^\prime_i(1) = k_{i}(f)
\end{equation*}

We can now replace $k^\prime_i$ and $\gamma^\prime_{i,k}$ by their value in~(\ref{eqn:sched-condPrevPaper}):

\begin{align*}
& m \geq \sum_{i=1}^{n}
\left(k_{i}(f)+\frac{u_i-\gamma_{i,k_{i}(f)} \times f}{\gamma_{i,k_{i}(f)+1} \times f -
\gamma_{i,k_{i}(f)} \times f} \right) \\
\Leftrightarrow \: &
m \geq \sum_{i=1}^{n}
\left(k_{i}(f)+\frac{u_i-\gamma_{i,k_{i}(f)}\times f}{(\gamma_{i,k_{i}(f)+1}-
\gamma_{i,k_{i}(f)})\times f} \right)~, \\
\end{align*}

which corresponds exactly to~(\ref{eqn:sched-cond}). So $\tau^\prime$ is feasible if and only if~(\ref{eqn:sched-cond}) holds. Thus, by Theorem~\ref{theo:tauprime}, $\tau$ is feasible if and only if~(\ref{eqn:sched-cond}) holds.
\end{proof}

\begin{definition}[Minimum number of processor function for parallel tasks and system]

For any $\tau_i \in \tau$:
\begin{equation*}
M_{i}(f) \equals k_{i}(f)+\frac{u_i-\gamma_{i,k_{i}(f)}\times f}{(\gamma_{i,k_{i}(f)+1}-
\gamma_{i,k_{i}(f)})\times f}
\end{equation*}

Therefore, we can define the same notion system-wide:
\begin{align*}
M_{\tau}(f) &\equals \sum_{i=1}^{n} M_{i}(f) \\
		& = \sum_{i=1}^{n} \left(k_{i}(f)+\frac{u_i-\gamma_{i,k_{i}(f)}\times f}{(\gamma_{i,k_{i}(f)+1}-\gamma_{i,k_{i}(f)})\times f} \right)
\end{align*}
\end{definition}

Based on this definition, the feasibility criteria~(\ref{eqn:sched-cond}) becomes:
\begin{equation}\label{eqn:feasibility_short}
m \geq M_{\tau}(f)
\end{equation}

Notice that, for a fixed frequency $f$, the minimum number of processors necessary and sufficient to schedule the system is $ \lceil M_{\tau}(f) \rceil $.

In the following, we will show that in our model, the feasibility of the system is sustainable regarding the frequency i.e. increasing the value of the frequency maintains the feasibility of the system. For this, we will need the following theorem.

\begin{theorem}\label{thm:m_decreasing}
$M_{\tau}(f)$ is a monotonically decreasing function for $f > 0$.
\end{theorem}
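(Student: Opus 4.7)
The plan is to reduce the claim to its per-task summands. Since $M_{\tau}(f) = \sum_{i=1}^n M_i(f)$ and a finite sum of monotonically decreasing functions is monotonically decreasing, it suffices to show that each $M_i$ is monotonically decreasing on $(0, \infty)$. The natural approach is to exploit the piecewise structure induced by $k_i(f)$: the inverse relation~(\ref{eqn:inverse-smalNbrProc}) partitions $(0, \infty)$ into intervals of the form $[u_i/\gamma_{i,\kappa+1},\, u_i/\gamma_{i,\kappa})$ on which $k_i(f) = \kappa$ is constant, so I would prove monotonicity piece by piece and then glue the pieces together at their endpoints.

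On any such interval I would rewrite
$$M_i(f) = \kappa + \frac{u_i}{(\gamma_{i,\kappa+1}-\gamma_{i,\kappa})\,f} - \frac{\gamma_{i,\kappa}}{\gamma_{i,\kappa+1}-\gamma_{i,\kappa}},$$
whose derivative in $f$ is $-u_i / [(\gamma_{i,\kappa+1}-\gamma_{i,\kappa})\,f^2]$. The sub-linear speedup ratio restriction supplies the strict inequality $\gamma_{i,\kappa+1} > \gamma_{i,\kappa}$, and $u_i, f > 0$, so this derivative is strictly negative and $M_i$ is strictly decreasing on the interior of each interval. The extremal case $\kappa = m$ is handled by the sentinel $\gamma_{i,m+1} = \infty$, which forces $M_i \equiv m$ on the lowest-frequency piece; this is still weakly decreasing.

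The main obstacle is the behavior at the breakpoints $f^\star = u_i/\gamma_{i,\kappa}$, where $k_i(f)$ drops discontinuously from $\kappa$ to $\kappa - 1$: I must rule out any upward jump of $M_i$ there. I would settle this by a direct continuity check. The left limit on the $\kappa$-piece gives $\lim_{f \to f^{\star-}} M_i(f) = \kappa$, because the numerator $u_i - \gamma_{i,\kappa} f$ vanishes at $f^\star$. Evaluating $M_i(f^\star)$ via the $(\kappa-1)$-piece and simplifying the resulting fraction $\left(u_i - \gamma_{i,\kappa-1}\,u_i/\gamma_{i,\kappa}\right) / \left((\gamma_{i,\kappa}-\gamma_{i,\kappa-1})\,u_i/\gamma_{i,\kappa}\right)$ reduces it to $1$, so $M_i(f^\star) = (\kappa - 1) + 1 = \kappa$. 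Continuity at every breakpoint, combined with strict decrease on every interior piece and constancy on the $\kappa = m$ piece, yields monotonic decrease of each $M_i$, and summing over $i$ finishes the proof.
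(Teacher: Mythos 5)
Your proposal is correct and follows essentially the same route as the paper's proof: reduce to the per-task functions $M_i$, show strict decrease on each interval where $k_i(f)$ is constant via the rewritten form $\kappa + \frac{u_i}{(\gamma_{i,\kappa+1}-\gamma_{i,\kappa})f} - \frac{\gamma_{i,\kappa}}{\gamma_{i,\kappa+1}-\gamma_{i,\kappa}}$, treat the $\kappa = m$ piece as constant, and verify the breakpoints $f = u_i/\gamma_{i,\kappa}$ by the same telescoping computation showing the value there equals $\kappa$. The only cosmetic difference is that you phrase the breakpoint step as a continuity check (left limit equals the value), while the paper directly compares $M_i(f')$ for $f'$ in the preceding interval against $M_i(f) = \kappa$; the underlying calculation is identical.
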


\begin{proof}
We will first prove someting stronger:
\begin{align}
& \forall \tau_i \in \tau, \forall f_1, f_2 \in \mathbb{R}^{+}_{0} : \notag \\
& 0 < f_1 \leq f_2 \Rightarrow M_i(f_1) \geq M_i(f_2) \label{eq:decreasing_mi}
\end{align}

First, notice that $\forall \, \tau_i \in \tau$, $k_i(f)$ is a decreasing staircase function. Indeed, the value of $k_i(f)$ depends on the satisfaction of $\gamma_{i,k} \times f < u_i$. In this inequation, the greater $f$ is, the smaller $\gamma_{i,k}$ has to be to hold it and so does $k$ because $\Gamma_i$ is ordered by model assumption ($0 < \gamma_{i,1} < \gamma_{i,2} < \ldots < \gamma_{i,m}$).

In order to confirm the decrease of $M_i(f)$, we have to consider both cases:
\begin{compactitem}
\item When $k_i(f)$ remains constant between two frequencies.
\item When $k_i(f)$ jumps a step between two frequencies.
\end{compactitem}

The case when $\kappa = m$ is trivial : $f < \frac{u_i}{\gamma_{i,m}}$, $M_i(f) = m$ and task $\tau_i$ is not schedulable on $m$ processors at this frequency. So before $f = \frac{u_i}{\gamma_{i,m}}$, $M_i(f)$ is constant and therefore monotonically decreases.

First case to consider is when $k_i(f) = \kappa$ is fixed ($\kappa \in \lbrace 0, 1, \ldots, m-1 \rbrace$). By~(\ref{eqn:inverse-smalNbrProc}), we have that for $f$ in $\left[ \frac{u_i}{\gamma_{i,\kappa+1}}, \frac{u_i}{\gamma_{i,\kappa}} \right) $, $k_i(f) = \kappa$ remains constant and
\begin{align*}
M_i(f) &= \kappa + \frac{u_i - \gamma_{i,\kappa} \times f}{(\gamma_{i,\kappa+1} - \gamma_{i,\kappa}) \times f} \\
		  &= \kappa + \frac{u_i}{(\gamma_{i,\kappa+1} - \gamma_{i,\kappa}) \times f} - \frac{\gamma_{i,k}}{\gamma_{i,\kappa+1} - \gamma_{i,\kappa}}
\end{align*}

\noindent decreases as a multiplicative inverse function (terms which don't depends on $f$ are fixed in this interval).\\

Now consider the case when there is a variation in the value of $k_i(f)$. This occurs only when $f = \frac{u_i}{\gamma_{i,k}}$ for $k = m, m-1, \ldots, 1$. At this exact value of the frequency, the value of $k_i(f)$ jumps from $k$ to $k-1$. We will prove that even in those cases, the function $M_i(f)$ still decreases.

We have to prove the following:
\begin{equation*}
\frac{u_i}{\gamma_{i,k+1}} \leq f^\prime < \frac{u_i}{\gamma_{i,k}} = f \quad \Rightarrow \quad M_i(f') > M_i(f)
\end{equation*}

Let us compute their values individually:
\begin{align*}
& \forall \, 1 \leq k \leq m : \\
& k_i(f) = k_i \left( \frac{u_i}{\gamma_{i,k}} \right) = k - 1 \\
\Rightarrow \; &
M_i(f) = M_i \left( \frac{u_i}{\gamma_{i,k}} \right) \\
		  & \qquad \; \, = k-1 + \frac{u_i - \gamma_{i,k-1} \times \frac{u_i}{\gamma_{i,k}} }{(\gamma_{i,k} - \gamma_{i,k-1}) \times \frac{u_i}{\gamma_{i,k}}} \\
		  & \qquad \; \, = k - 1 + 1 \\
		  & \qquad \; \, = k \\
& k_i(f^\prime) = k \quad \text{because} \quad f^\prime \in k_i^{-1}(k) \\
\Rightarrow \; &
M_i(f^\prime) = k + \frac{u_i - \gamma_{i,k} \times f^\prime}{(\gamma_{i,k+1} - \gamma_{i,k}) \times f^\prime}
\end{align*}

We know $u_i - \gamma_{i,k} \times f^\prime > 0$ because $f^\prime < \frac{u_i}{\gamma_{i,k}}$, $\gamma_{i,k+1} > \gamma_{i,k}$ is true by model assumption and $f^\prime > 0$. Thus we have, with $\varepsilon > 0$:
\begin{align*}
& M_i(f') = M_i(f) + \varepsilon \\
\Rightarrow \; & M_i(f') > M_i(f)
\end{align*}

We have now proved~(\ref{eq:decreasing_mi}) for every possible value of $k_i(f)$. Thus:
\begin{align*}
& f_1, f_2 \in \mathbb{R}^+_0 : 0 < f_1 \leq f_2 \\
&\Rightarrow M_i(f_1) \geq M_i(f_2) \qquad \forall \tau_i \in \tau \\
&\Rightarrow \sum_{i=1}^{n} M_{i}(f_1) \geq \sum_{i=1}^{n} M_{i}(f_2) \\
&\Rightarrow M_{\tau}(f_1) \geq M_{\tau}(f_2)
\end{align*}
\end{proof}

This theorem directly implies the following property.

\begin{property}\label{prop:sustainable_frequency}
The feasibility of the system is sustainable regarding the frequency\footnote{i.e., increasing the frequency preserves the system schedulability.}.
\end{property}

\begin{proof}
By~(\ref{eqn:feasibility_short}), $\tau$ is feasible on $m$ processors at frequency $f > 0$, if and only if $m \geq M_{\tau}(f)$.

If $\tau$ is feasible on $m$ processors at frequency $f > 0$,  then $\tau$ is feasible on $m$ processors at any greater frequency $f^\prime \geq f$ because, by Theorem~\ref{thm:m_decreasing}, the following holds:
\begin{align*}
& f \leq f^\prime \quad \text{and} \quad m \geq M_{\tau}(f) \\
\Rightarrow \; & m \geq M_{\tau}(f) \geq M_{\tau}(f^\prime) \\
\Rightarrow \; & m \geq M_{\tau}(f^\prime)~,
\end{align*}
which corresponds to the feasibility criteria at frequency $f^\prime$.
\end{proof}

\subsection{Minimum optimal frequency}\label{subs:minfreq}

Property~\ref{prop:sustainable_frequency} implies that there is a minimum frequency for the system to be feasible. Then, it would be interesting to have an algorithm to compute it for a particular task system $\tau$ and a maximum number of processors $m$. We will first derive a constraint on the frequency from the feasibility criteria. After that, we will use this constraint to design an algorithm that computes the optimal minimum frequency in $O(n^2 \log^2_2 m)$ time.

\begin{definition}[Minimum frequency notation]
\begin{equation*}\label{eqn:feasible-freq}
\Psi(\tau, m, \vec{\kappa}) \equals \frac{\sum_{i=1}^n \frac{u_i}{\gamma_{i,\kappa_i +1} - \gamma_{i,\kappa_i}}}{m-\sum^n_{i=1} \left(\kappa_i - \frac{\gamma_{i,\kappa_i}}{\gamma_{i,\kappa_i +1} - \gamma_{i,\kappa_i}} \right)}~,
\end{equation*}
where $\vec{\kappa} = (\kappa_1, \kappa_2, \ldots, \kappa_n)$.
\end{definition}

\begin{property}\label{prop:min_freq_constr}
If $\vec{k}(f) = (k_1(f), k_2(f), \ldots, k_n(f))$, then the following holds:
\begin{equation}
m \geq M_\tau(f) \Leftrightarrow f \geq \Psi(\tau, m, \vec{k}(f))
\end{equation}
\end{property}

\begin{proof}
Let us define a few more notations:
\begin{align*}
M^{\prime}_{\tau}(f) &= \sum_{i=1}^{n} \left( k_{i}(f) - \frac{\gamma_{i,k_i(f)}}{\gamma_{i,k_i(f)+1} - \gamma_{i,k_i(f)}} \right) \\
M^{\prime\prime}_{\tau}(f) &= \sum_{i=1}^{n} \left( \frac{u_i}{\gamma_{i,k_i(f)+1} - \gamma_{i,k_i(f)}} \right) \\
\end{align*}
A few things to notice:
\begin{compactitem}
\item we have $M^{\prime\prime}_{\tau}(f) > 0$ because $u_i > 0$ (tasks aren't trivial) and $\gamma_{i,k} < \gamma_{i,k+1} \, \forall \, k\in\lbrace 0, 1, \ldots, m \rbrace$ (sub-linear speedup ratio).
\item $M_{\tau}(f) = M^{\prime}_{\tau}(f) + \frac{M^{\prime\prime}_{\tau}(f)}{f} \quad$ (with $f > 0$).
\item the last two items implies that $M_\tau(f) > M^\prime_\tau(f)$
\end{compactitem}

We have:
\begin{align}
& m \geq M_\tau(f) > M^\prime_\tau(f) \notag \\
\Rightarrow \; & m > M^\prime_\tau(f) \notag \\
\Leftrightarrow \; & m - M^\prime_\tau(f) > 0 \label{eqn:m_mgt0}
\end{align}

And:
\begin{align*}
& m \geq M_\tau(f) \\
\Leftrightarrow \; & m \geq M^{\prime}_{\tau}(f) + \frac{M^{\prime\prime}_{\tau}(f)}{f} \\
\Leftrightarrow \; & \underbrace{m - M^{\prime}_{\tau}(f)}_{> 0 \text{~by~(\ref{eqn:m_mgt0})}} \geq \frac{M^{\prime\prime}_{\tau}(f)}{\underbrace{f}_{> 0}} \\
\Leftrightarrow \; & f \geq \frac{M^{\prime\prime}_{\tau}(f)}{m - M^{\prime}_{\tau}(f)} = \Psi(\tau, m, \vec{k}(f))
\end{align*}
\end{proof}

Let us denote by $f_{MIN}$ the optimal minimum frequency such that the system $\tau$ is feasible on $m$ processors. By Property~\ref{prop:min_freq_constr}, $f_{MIN}$ is the smallest real positive number $f$ such that
\begin{equation}\label{def:freqminopt}
f \geq \Psi(\tau, m, \vec{k}(f))~.
\end{equation}

Consider fixing each $k_i(f)$ term such that they are equal to $k_i(f_{MIN})$. From there, it would be easy to calculate $f_{MIN}$ with the function $\Psi$ (in $O(n)$ time).

The first thing the algorithm will do is then searching those values (denoted by $\bar{\kappa}_1, \bar{\kappa}_2, \ldots, \bar{\kappa}_n$ such that $\bar{\kappa}_i = k_i(f_{MIN}) \; \forall \tau_i \in \tau$) and then compute the value of $f_{MIN}$ with the expression $\Psi(\tau, m, \bar{\kappa} = (\bar{\kappa}_1, \bar{\kappa}_2, \ldots \bar{\kappa}_n))$. The algorithm will be presented in the next section.

\subsection{Algorithm Description}\label{subsec:algo-desc}

\begin{algorithm}
\caption{$\mathsf{feasible}(\tau, m, f)$}\label{alg:feasibility}
$\mathsf{sum} \leftarrow 0$ \\
\For{$\tau_i \in \tau$}
{
		$\kappa_i \leftarrow k_i(f)$ \\
		$\mathsf{sum} \leftarrow \mathsf{sum} + \kappa_i + \frac{u_i - \gamma_{i,\kappa_i} \times f}{(\gamma_{i,\kappa_i+1} - \gamma_{i,\kappa_i}) \times f} $ \\
}
\textbf{return} $m \geq \mathsf{sum}$
\end{algorithm}

\begin{algorithm}
\caption{$\mathsf{minimumOptimalFrequency}(\tau, m)$}\label{alg:minfreq}
\For{$i \in \lbrace 1, 2, \ldots, n \rbrace$}
{
	\If{$\mathsf{feasible}(\tau, m, \frac{u_i}{\gamma_{i,m}})$}
	{
		$\bar{\kappa}_i \leftarrow m-1$\\
	}
	\Else
	{
		$\bar{\kappa}_i \leftarrow \min_{\kappa = 0}^{m-1} \lbrace \kappa \mid \mathsf{\bf{not}} \; \mathsf{feasible}(\tau, m, \frac{u_i}{\gamma_{i,\kappa+1}}) \rbrace $
	}
}
$\bar{\kappa} \equals \left( \bar{\kappa}_1, \bar{\kappa}_2, \ldots, \bar{\kappa}_n \right) $ \\
$f_{MIN} \leftarrow \Psi(\tau, m, \bar{\kappa})$ \\
\textbf{return} $f_{MIN}$
\end{algorithm}

We have designed an algorithm to determine the optimal minimum frequency (see Algorithm~\ref{alg:minfreq}). The algorithm essentially systematically searches for the minimum frequency that that satisfies the constraints of~\ref{eqn:sched-cond} by calling the feasibility test function (Algorithm~\ref{alg:feasibility}). For each value $\kappa$ that we want to test, we determine from~(\ref{eqn:inverse-smalNbrProc}) the minimum frequency $f$ such that $\tau_i$ requires $\kappa+1$ processors (i.e, $f = k^{-1}_i(\kappa).f_1 = \frac{u_i}{\gamma_{i,\kappa+1}}$).  The value of $f$ can be determined in $O(1)$ time.

In the feasibility test, we determine the value of $k_i(f)$ from frequency $f$ according to~(\ref{eq:smalNbrProc}), which can be obtained in $O(\log_2 m)$ time by binary search over $m$ values. Thus, to calculate $k_i(f)$ for all $\tau_i \in \tau$ and sum every $M_i(f)$ terms, the total time complexity of the feasibility test is $O(n \log_2 m)$.

In the main algorithm aimed at calculating $f_{MIN}$, the value of $\bar{\kappa}_i$ can also be found by binary search and thus takes $O(\log_2 m)$ time to be computed. This is made possible by the sustainability of the system regarding the frequency (proofed by Property~\ref{prop:sustainable_frequency}). Indeed, if $\tau$ is feasible on $m$ processors with $\kappa_i$ ($f = \frac{u_i}{\gamma_{i,\kappa_i+1}}$), then it's also feasible with $\kappa_i-1$ ($f = \frac{u_i}{\gamma_{i,\kappa_i}} > \frac{u_i}{\gamma_{i,\kappa_i+1}}$).

In order to calculate the complete vector $\bar{\kappa}$, there will be $O(n \log_2 m)$ calls to the feasibility test. Since computing $\Psi$ is linear-time when the vector $\bar{\kappa}$ is already stored in memory, the total time complexity to determine the optimal feasible frequency is $O(n^2 \log_2^2 m)$.
In order to determine the optimal combination of frequency and number of processors, we simply iterate over all possible number of active processors $\ell = 1, 2, \ldots, m$ executing Algorithm~\ref{alg:minfreq} with inputs $\tau$ and $\ell$.  We return the combination that results in the minimum overall power-dissipation rate.  Thus, the overall complexity to find the optimal combination is $O(m n^2 \log_2^2 m)$.

\subsection{An Example}\label{subsec:algo-example}

Let us use the same example system than previously introduced in Section~\ref{subsec:scheduler}. Consider $\tau=\{\tau_1,\tau_2 \}$ to be scheduled on $m = 3$ identical processors. Tasks are defined as follow : $\tau_1=(6,4,\Gamma_1)$ with $\Gamma_1=(1.0, 1.5, 2.0)$ and $\tau_2=(3,4,\Gamma_2)$ with $\Gamma_2=(1.0, 1.2, 1.3)$. The vector $\bar{\kappa}$ corresponding to this configuration computed by the algorithm is equal to $(\bar{\kappa}_1 = 2, \bar{\kappa}_2 = 0)$. This implies that the optimal minimum frequency for this system to be feasible on $3$ processors is equal to $f_{MIN} = \Psi(\tau, 3, \langle2,0\rangle) = 0.9375$. We can see that if we call the feasibility test function for any frequency greater or equal than $0.9375$, it will return $\mathsf{True}$; it will return $\mathsf{False}$ for any lower value.

\subsection{Proof of Correctness}\label{subsec:algo-correct}

The efficiency and correctness of the above algorithm depends upon the theorem presented in Sections~\ref{subs:feasibility} and~\ref{subs:minfreq}. Furthermore, the algorithm is correct if the value $\Psi$ computed using the previously calculated vector $\bar{\kappa}$ is equal to the minimum optimal frequency as defined by~(\ref{def:freqminopt}). That will be the goal of our last theorem.

\begin{theorem}
\begin{equation}
f_{MIN} = \Psi(\tau, m, \bar{\kappa})
\end{equation}
\end{theorem}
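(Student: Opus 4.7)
The plan is to establish two key facts from which the theorem follows immediately: (i) the vector $\bar{\kappa}$ computed by Algorithm~\ref{alg:minfreq} equals $\vec{k}(f_{MIN})$ (up to a harmless boundary ambiguity), and (ii) $f_{MIN}$ is a fixed point of the map $f \mapsto \Psi(\tau, m, \vec{k}(f))$. Taken together, these give $f_{MIN} = \Psi(\tau,m,\vec{k}(f_{MIN})) = \Psi(\tau,m,\bar{\kappa})$.

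For fact~(ii), I would argue that $M_\tau$ is both continuous and strictly decreasing on $(0,\infty)$. Strict monotonicity was already established inside the proof of Theorem~\ref{thm:m_decreasing}: the staircase plateaus decrease as multiplicative inverses, and the jumps across the breakpoints $u_i/\gamma_{i,k}$ are strict. Continuity follows from the same boundary computation used there, where $M_i(u_i/\gamma_{i,k})$ is shown to equal $k$ from the higher cell, and one checks that the lower-cell formula also tends to $k$ as $f \to (u_i/\gamma_{i,k})^{-}$. Consequently $f_{MIN}$ is the unique root of $M_\tau(f)=m$, and unpacking Property~\ref{prop:min_freq_constr} to its equality case (the same algebraic rearrangement $M_\tau = M'_\tau + M''_\tau/f$ used in its proof, but read as an equivalence of equalities rather than of inequalities) yields $f_{MIN} = \Psi(\tau, m, \vec{k}(f_{MIN}))$.

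For fact~(i), I would appeal to Property~\ref{prop:sustainable_frequency}: infeasibility at frequency $f$ is equivalent to $f<f_{MIN}$. By construction of Algorithm~\ref{alg:minfreq}, $\bar{\kappa}_i$ is the smallest $\kappa\in\{0,\ldots,m-1\}$ for which $u_i/\gamma_{i,\kappa+1}$ is infeasible (or $m-1$ in the degenerate case where even $u_i/\gamma_{i,m}$ is feasible). Minimality thus gives $u_i/\gamma_{i,\bar{\kappa}_i+1}<f_{MIN}$, and when $\bar{\kappa}_i>0$ the immediately preceding test was feasible, yielding $u_i/\gamma_{i,\bar{\kappa}_i}\geq f_{MIN}$. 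Reading these two inequalities against~(\ref{eqn:inverse-smalNbrProc}) places $f_{MIN}$ inside $k_i^{-1}(\bar{\kappa}_i)$ in the typical case, so $k_i(f_{MIN})=\bar{\kappa}_i$.

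The main obstacle I anticipate is the degenerate boundary case in which $f_{MIN}$ lands exactly on an endpoint $u_i/\gamma_{i,\bar{\kappa}_i}$: by the half-open convention in~(\ref{eqn:inverse-smalNbrProc}) this point lies in $k_i^{-1}(\bar{\kappa}_i-1)$, not $k_i^{-1}(\bar{\kappa}_i)$, so $\vec{k}(f_{MIN})$ and $\bar{\kappa}$ disagree in coordinate~$i$. The remedy is the continuity of $M_\tau$ across this breakpoint: both the ``$k_i=\bar{\kappa}_i$'' and ``$k_i=\bar{\kappa}_i-1$'' cell formulas for $M_\tau$ evaluate to $m$ at $f_{MIN}$, and each is strictly monotone, so the unique root of each equation, which is precisely the corresponding evaluation of $\Psi$, coincides with $f_{MIN}$. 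Hence $\Psi(\tau,m,\bar{\kappa})=f_{MIN}$ regardless of which side of the breakpoint $f_{MIN}$ formally belongs to, and the theorem holds in all cases.
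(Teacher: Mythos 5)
Your proposal is correct and follows essentially the same route as the paper's own proof: both reduce the claim to showing $M_\tau(f_{MIN})=m$ with $\sum_i M_i(f_{MIN},\bar{\kappa}_i)=M_\tau(f_{MIN})$, use the feasibility status of the tested breakpoints $u_i/\gamma_{i,\kappa+1}$ to place $f_{MIN}$ in $k_i^{-1}(\bar{\kappa}_i)$ (so $k_i(f_{MIN})=\bar{\kappa}_i$ except when $f_{MIN}=u_i/\gamma_{i,\bar{\kappa}_i}$), and then handle that boundary case separately. Your ``continuity across the breakpoint'' step is exactly the paper's direct evaluation showing that both adjacent cell formulas for $M_i$ equal $\bar{\kappa}_i$ there, so the two arguments coincide.
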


\begin{proof}
We will need an auxiliary notion:
\begin{equation*}
M_{i}(f, \kappa) \equals \kappa + \frac{u_i - \gamma_{i, \kappa} \times f}{(\gamma_{i,\kappa+1} - \gamma_{i,\kappa}) \times f} \quad \kappa \in \lbrace 0, 1, \ldots, m-1 \rbrace
\end{equation*}

Notice the following:
\begin{equation*}
M_{i}(f) = M_{i}(f, k_{i}(f)) \quad \quad \forall \; \tau_i \in \tau
\end{equation*}

By definition,
\begin{align*}
f_{MIN} &= \min \lbrace f \in \mathbb{R}^+_0 \mid m \geq M_\tau(f) \rbrace \\
&= \min \lbrace f \in \mathbb{R}^+_0 \mid f \geq \Psi(\tau, m, \vec{k}(f)) \rbrace \quad \text{by Property~\ref{prop:min_freq_constr}} \\
&= \Psi(\tau, m, \vec{k}(f_{MIN}))\,.
\end{align*}
This is equivalent to
\begin{equation*}
m = M_\tau(f_{MIN}) \quad \text{by Property~\ref{prop:min_freq_constr}.}
\end{equation*}

We will prove the following:
\begin{align*}
\forall \; 1 \leq i \leq n: M_{i}(f_{MIN}, \bar{\kappa}_i) &= M_{i}(f_{MIN}, \vec{k}_i(f_{MIN})) \\
&= M_{i}(f_{MIN}) \\
\end{align*}
This would imply:
\begin{align*}
& \sum_{i=1}^{n} M_i(f_{MIN}, \bar{\kappa}_i) = M_{\tau}(f_{MIN}) = m \\
& \Leftrightarrow \Psi(\tau, m, \bar{\kappa}) = \Psi(\tau, m, \vec{k}(f_{MIN})) = f_{MIN}
\end{align*}

Notice that when $k_i(f_{MIN}) = \bar{\kappa}_i$, then we have $M_i(f_{MIN}) = M_i(f_{MIN},\bar{\kappa}_i)$.

We will have four cases of possible value of $\bar{\kappa}_i$ to investigate:
\begin{compactitem}
\item $\bar{\kappa}_i = m-1$, the basic case of the algorithm,
\item $\bar{\kappa}_i = 0$,
\item $\bar{\kappa}_i > 0$, when $f_{MIN} \neq \frac{u_i}{\gamma_{i,\bar{\kappa}_i}}$,
\item $\bar{\kappa}_i > 0$, when $f_{MIN} = \frac{u_i}{\gamma_{i,\bar{\kappa}_i}}$.
\end{compactitem}

Basic case, $\bar{\kappa}_i = m-1$:
\begin{align*}
\mathsf{feasible}(\tau, m, \frac{u_i}{\gamma_{i,m}}) &\Rightarrow f_{MIN} \leq \frac{u_i}{\gamma_{i,m}} < \frac{u_i}{\gamma_{i,m-1}} < \ldots < \frac{u_i}{\gamma_{i,1}} \\
&\Rightarrow k_i(f_{MIN}) \geq k(\frac{u_i}{\gamma_{i,m}}) = m-1\,,
\end{align*}
but for the system to be feasible, we must have $k_i(f_{MIN}) < m$, so:
\begin{equation*}
\Rightarrow m-1 \leq k_i(f_{MIN}) < m \Rightarrow k_i(f_{MIN}) = m-1
\end{equation*}

Complex case, $\bar{\kappa}_i = 0$:
\begin{align*}
\neg \mathsf{feasible}(\tau, m, \frac{u_i}{\gamma_{i,1}}) &\Rightarrow f_{MIN} > \frac{u_i}{\gamma_{i,1}} > \frac{u_i}{\gamma_{i,2}} > \ldots > \frac{u_i}{\gamma_{i,m}} \\
&\Rightarrow \frac{u_i}{\gamma_{i,1}} < f_{MIN} < \infty \\
&\Rightarrow f_{MIN} \in k^{-1}_{i}(0) \\
&\Rightarrow k_i(f_{MIN}) = 0
\end{align*}

Complex case, $\bar{\kappa}_i > 0$:
\begin{align*}
& \neg \mathsf{feasible}(\tau, m, \frac{u_i}{\gamma_{i,\bar{\kappa}_i+1}}) \; \wedge \; \mathsf{feasible}(\tau, m, \frac{u_i}{\gamma_{i,\bar{\kappa}_i}}) \\
& \Rightarrow \frac{u_i}{\gamma_{i,\bar{\kappa}_i+1}} < f_{MIN} \leq \frac{u_i}{\gamma_{i,\bar{\kappa}_i}}
\end{align*}

Case $f_{MIN} \neq \frac{u_i}{\gamma_{i,\bar{\kappa}_i}}$:
\begin{align*}
&\Rightarrow \frac{u_i}{\gamma_{i,\bar{\kappa}_i+1}} < f_{MIN} < \frac{u_i}{\gamma_{i,\bar{\kappa}_i}} \\
&\Rightarrow f_{MIN} \in \; ]\frac{u_i}{\gamma_{i,\bar{\kappa}_i+1}}, \frac{u_i}{\gamma_{i,\bar{\kappa}_i}}[ \; \subset k^{-1}_i(\bar{\kappa}_i) \\
&\Rightarrow k_i(f_{MIN}) = \bar{\kappa}_i
\end{align*}

Case $f_{MIN} = \frac{u_i}{\gamma_{i,\bar{\kappa}_i}}$:
\begin{align*}
& k_i(f_{MIN}) = \bar{\kappa}_i - 1 \\
&\Rightarrow M_i(f_{MIN}) = M_i(f_{MIN}, k_i(f_{MIN})) \\
& \qquad \qquad \qquad \, = M_i(\frac{u_i}{\gamma_{i,\bar{\kappa}_i}}, \bar{\kappa}_i-1) \\
& \qquad \qquad \qquad \, = \bar{\kappa}_i-1 + \frac{u_i - \gamma_{i,\bar{\kappa}_i-1} \times \frac{u_i}{\gamma_{i,\bar{\kappa}_i}}}{(\gamma_{i,\bar{\kappa}_i} - \gamma_{\bar{\kappa}_i-1}) \times \frac{u_i}{\gamma_{i,\bar{\kappa}_i}}} \\
& \qquad \qquad \qquad \, = \bar{\kappa}_i - 1 + 1 \\
& \qquad \qquad \qquad \, = \bar{\kappa}_i \\
& M_i(f_{MIN}, \bar{\kappa}_i) = \bar{\kappa}_i + \frac{u_i - \gamma_{i,\bar{\kappa}_i} \times \frac{u_i}{\gamma_{i,\bar{\kappa}_i}}}{(\gamma_{i,\bar{\kappa}_i+1} - \gamma_{\bar{\kappa}_i}) \times \frac{u_i}{\gamma_{i,\bar{\kappa}_i}}} \\
& \qquad \qquad \qquad = \bar{\kappa}_i
\end{align*}
\end{proof}

\section{Experimental Evaluation \& Simulation}\label{sec:evaluation}

In order to obtain realistic predictions regarding the effect of parallelism upon power consumption, we have evaluated our algorithm upon an actual hardware testbed.  In this section, we describe and discuss the high-level overview of the methodology employed in our evaluation, the low-level details involved in our evaluation methodology, and the results obtained from our experiments.

\subsection{Methodology Overview}\label{subsec:eval-overview}

Realistic predictions of the energy behavior of a real-time parallel system using our frequency-selection algorithm requires a hard-real-time parallel application to execute upon an instrumented multicore hardware testbed.  In the Compositional and Parallel Real-Time Systems (CoPaRTS) laboratory at Wayne State University, we have developed a power/thermal-aware testbed infrastructure to obtain accurate power and temperature readings.  Thus, we may obtain realistic hardware power measurements for any application executing on our testbed.

Regarding the hard-real-time parallel application, we are unfortunately not aware of any such available application that matches the malleable job model used in this paper\footnote{In fact, we are also unaware of any commercially or freely-available application for any of the other hard-real-time parallel job models.}.  However, given the continuous march of the real-time and embedded computing domains towards increasingly parallel architectures, we fully expect that such applications will be developed in the near future.  Thus, it behooves us to obtain as close to realistic as possible parameters for such future parallel real-time applications.   We have developed a methodology with this goal in mind.  Below is a high-level overview of the steps of our design methodology.  The details for each step are in the next subsection.

\begin{enumerate}
    \item\label{step:hardware} {\bf Modify Testbed}: We have modified a multicore platform to obtain accurate instantaneous CPU power readings.  Furthermore, our hardware testbed has the ability to run at a discrete set of frequencies and turn off individual cores.  Thus, our platform can approximately implement the frequencies determined from the frequency/processor-selection algorithm (Section~\ref{sec:algo}).
    \item\label{step:speedup} {\bf Obtain Realistic Speedup Vectors}:  Since we do not possess a hard-real-time application with malleable parallel jobs, we have observed the execution behavior of two different non-real-time parallel benchmarks (an I/O-constrained and non-I/O-constrained application) over different processing frequencies and levels of parallelism.  Our observations are used to construct two realistic speedup vectors to use in our stimulation (Step~\ref{step:simulation}).
    \item\label{step:power-readings}  {\bf Obtain Realistic Power Rates}:  Using the same non-real-time parallel benchmarks, we also construct a matrix of power dissipation rates over a range of processing frequencies and number of active cores.  Again, our measurements are utilized in the next simulation step.
    \item\label{step:simulation} {\bf Power-Savings Simulation}:  After obtaining the speedup vectors and corresponding power dissipation rates, we evaluate our algorithm over randomly generated task systems.  Our frequency/processor-selection algorithm is compared against the power required by an optimal non-parallel real-time scheduling approach (e.g., Pfair~\cite{BCPV96}).
\end{enumerate}

\subsection{Methodology Details}\label{subsec:eval-details}

\subsubsection{Testbed}\label{subsubsec:testbed}
For our testbed platform, we use an Intel $\emph{i}7~950$ processor with eight cores  (four physical cores with each physical core having two ``soft'' cores -- i.e., hyperthreads). The processor supports 13 different frequency settings.  (The processor sets the frequency level and all cores execute at the global frequency).  We use a Linux 2.6.27 kernel with PREEMPT-RT patch as our operating system. In addition, we have developed kernel modules for individual core shutdown and for frequency modulation functionality.

The testbed requires a few hardware modifications to measure the actual CPU power usage.  Towards this goal, we connect four shunt resisters, in-series ($.05\Omega$ each), with the four-wire eATX power connector interfaces of the motherboard (each $12$V power line is shunted with $0.05\Omega$ resisters). We measure the current ($A$) drawn by the CPU using National Instrument's NI 9205 Data Acquisition unit. Then, we calculate the total instantaneous CPU power (as the sum of all the individual powers) through each eATX $+12$V motherboard connectors. We run the testbed under the 13 different supported frequencies and active number of cores settings and record the corresponding power dissipation rates for the system.  When the number of active cores is less than eight, there is a choice of which core to shutdown.  To address this choice, we consider all the possible shutdown scenarios for a given number of active cores and use the average of the power-rate of all the scenarios. For example, in our eight-core processor, we have seven different ways to shutdown a single core\footnote{We cannot shutdown the core $0$ (boot core).}.  We calculate the power consumption of the system for each individual case and the average power is recorded as our final power-rate measurement for the combination of the given frequency and number of active cores.

\subsubsection{Speedup Vectors and Power Functions}

From our testbed, we can generate both a speedup vector and power-dissipation-rate function for non-I/O-constrained (i.e., CPU-bound) and I/O-constrained (i.e., memory-bound) parallel applications.  In order to obtain these parameters, we use two parallel applications: a modified version of \emph{Jetbench}~\cite{jetbench_Sim} for an non-I/O-consrained application and a modified parallel version of the \emph{GNU Compiler Collection} (GCC)~\cite{GCC_Ref} for an I/O-constrained application. Jetbench is an Open Source OpenMP-based multicore benchmark application that emulates the jet engine performance from real jet engine parameters and thermodynamic equations presented in the NASA's EngineSim program.  For GCC, using the ``-j'' option for GNU Make~\cite{GNU_Make_Ref}, we concurrently compile a collection of source code files under variable number of active processor cores.

To obtain the speedup vectors for both Jetbench and GCC, we execute the applications upon different numbers of active cores, recording for each number of cores the response time for the application.  The speedup for $x$ number of cores is determined by the ratio between the response time on one core to the response time of the application running concurrently on $x$ cores.  Figure~\ref{fig:speedup-vecs} plots the speedup vector for the two applications.  Not surprisingly, Jetbench benefits more greatly from increasing number of processors due to the CPU-bounded nature and inherently parallelizable workload.

\begin{figure}
\begin{center}
    \includegraphics[scale=.35, viewport=10 260 575 650, clip]{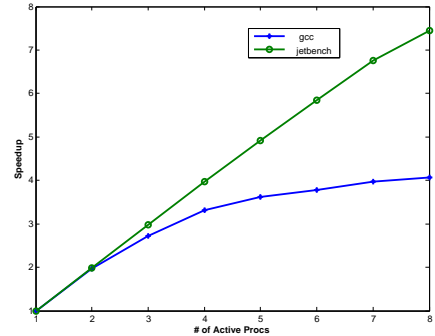}
\caption{Speedup Vectors for Jetbench and GCC}\label{fig:speedup-vecs}
\end{center}
\end{figure}

For determining the power-dissipation rates for both Jetbench and GCC, we execute these applications for all combinations of frequency and number of active cores and record both the power-dissipation rate and the speedup values for the application.  The power-dissipation rates are determined using the measurement hardware described above in Section~\ref{subsubsec:testbed}.  Each recorded value is an average of the power measured at a 1ms sampling intervals for the duration of the application.  Figure~\ref{fig:jetbench-power-map} plots the power-dissipation function for Jetbench; Figure~\ref{fig:gcc-power-map} plots the power-dissipation function for GCC.  Observe that the power-dissipation level for Jetbench are slightly higher in most cases than the levels for GCC; this is likely due to the fact that GCC idles the processor more often during I/O operations.


%
%

\begin{figure}
\begin{center}
    \includegraphics[scale=.50, viewport= 30 350 575 650, clip]{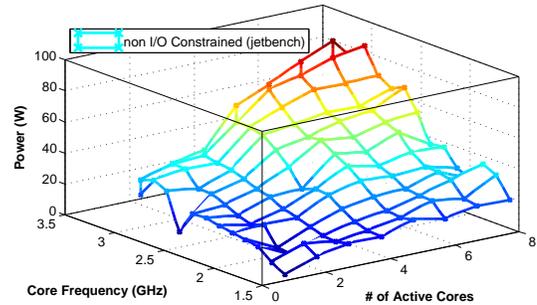}
\caption{Power Function Over Varying Number of Active Cores and Frequencies for Non-I/O-Constrained Parallel Application (Jetbench)}\label{fig:jetbench-power-map}
\end{center}
\end{figure}

\begin{figure}
\begin{center}
    \includegraphics[scale=.50, viewport= 10 350 575 650, clip]{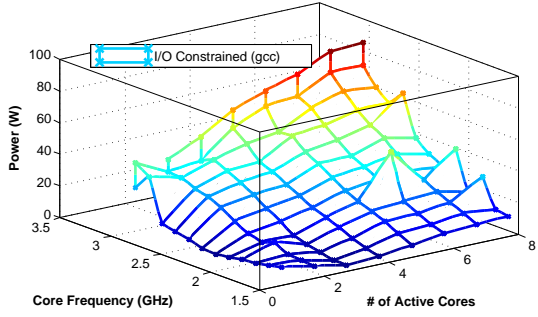}
\caption{Power Function Over Varying Number of Active Cores and Frequencies for I/O-Constrained Parallel Application (GCC)}\label{fig:gcc-power-map}
\end{center}
\end{figure}

\subsubsection{Power-Savings Simulation}

We randomly generate task systems using a variant of the \verb|UUnifast-Discard| algorithm by Davis and Burns~\cite{DB2009}.  In the \verb|UUnifast-Discard| algorithm, the user supplies a desired system-level utilization and number of tasks, and the algorithm returns a task system where each task has its task utilization randomly-generated from a uniform distribution each task utilization.  The difference between \verb|UUnifast-Discard| and the original \verb|UUnifast| algorithm from Bini and Buttazzoo~\cite{EG2004} is that \verb|UUnifast-Discard| generates task systems with system utilizations exceeding one, but task utilizations at most one.  These restrictions make \verb|UUnifast-Discard| appropriate for multiprocessor scheduling settings with non-parallel real-time jobs.  To extend the \verb|UUnifast-Discard|, we modified the algorithm to permit task utilizations to exceed one (i.e., a job is required to execute on more than one processor to complete by its deadline) and fix a single task at a given maximum utilization.  We call our extended algorithm \verb|UUnifast-Discard-Max|.  The utilization for each task generated by \verb|UUnifast-Discard-Max| (except for the task with fixed maximum utilization) is drawn from a uniform distribution.

Using the random-task generator, we generate task systems with a total of eight tasks.  The total system utilization is varied from $1.5$ to $8$ and the \verb|UUnifastDiscard_max| algorithm assigns a maximum utilization to the first task. We run our testbed with maximum utilization value $U_{\max}$ (i.e., $\max_{i=1}^n u_i$) equal to $.4$, $.8$, and $1.2$ in our simulations. Also, to match our testbed settings and the simulations, we select the number of CPUs from $1$ to $8$. The simulation runs for all the possible values of $1.5$ to $8$ utilization in $.1$ increments and number of available cores is varied from one through eight.  We run a variant of the Algorithm~\ref{alg:minfreq} that iterates through all frequencies and number of active core combinations, instead of using a binary search.  (Our power function does not exactly satisfy the non-decreasing property required for binary search to work). In each utilization point, we store the exact frequency returned by our algorithm.  For comparison, we determine the minimum frequency required for a optimal (non-parallel) scheduling algorithm to schedule the same task system.  This value can be obtained by solving the following for $f$: for any task system $\tau$, $U(\tau) \leq m f$.  Using these resulting frequencies, we obtain the optimal minimum frequency for the non-parallel and parallel settings.  We then use these frequencies to look up the power-dissipation rates for the respective application by using the functions displayed in Figures~\ref{fig:jetbench-power-map} and~\ref{fig:gcc-power-map}.  In the next subsection, we plot the power savings; i.e., we plot the power-dissipation level obtained from our algorithm minus the power-dissipation level required for the optimal non-parallel algorithm.  Each data point is the average power saving for 1000 different randomly-generated task systems.
\begin{figure}
\begin{center}
    \includegraphics[scale=.4, viewport= 40 300 575 600, clip]{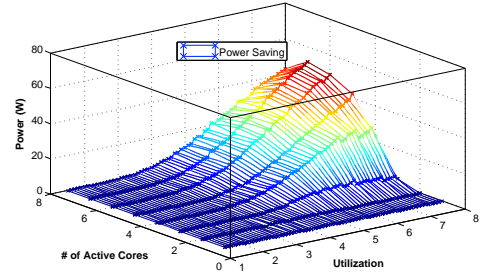}
\caption{Average power savings for non I/O Constrained workload (jetbench) when $U_{\max}=.4$}\label{fig:OPT_non_PRL_power_with_JB_U_max_point_4}
\end{center}
\end{figure}
\begin{figure}
\begin{center}
    \includegraphics[scale=.4, viewport= 55 300 575 600, clip]{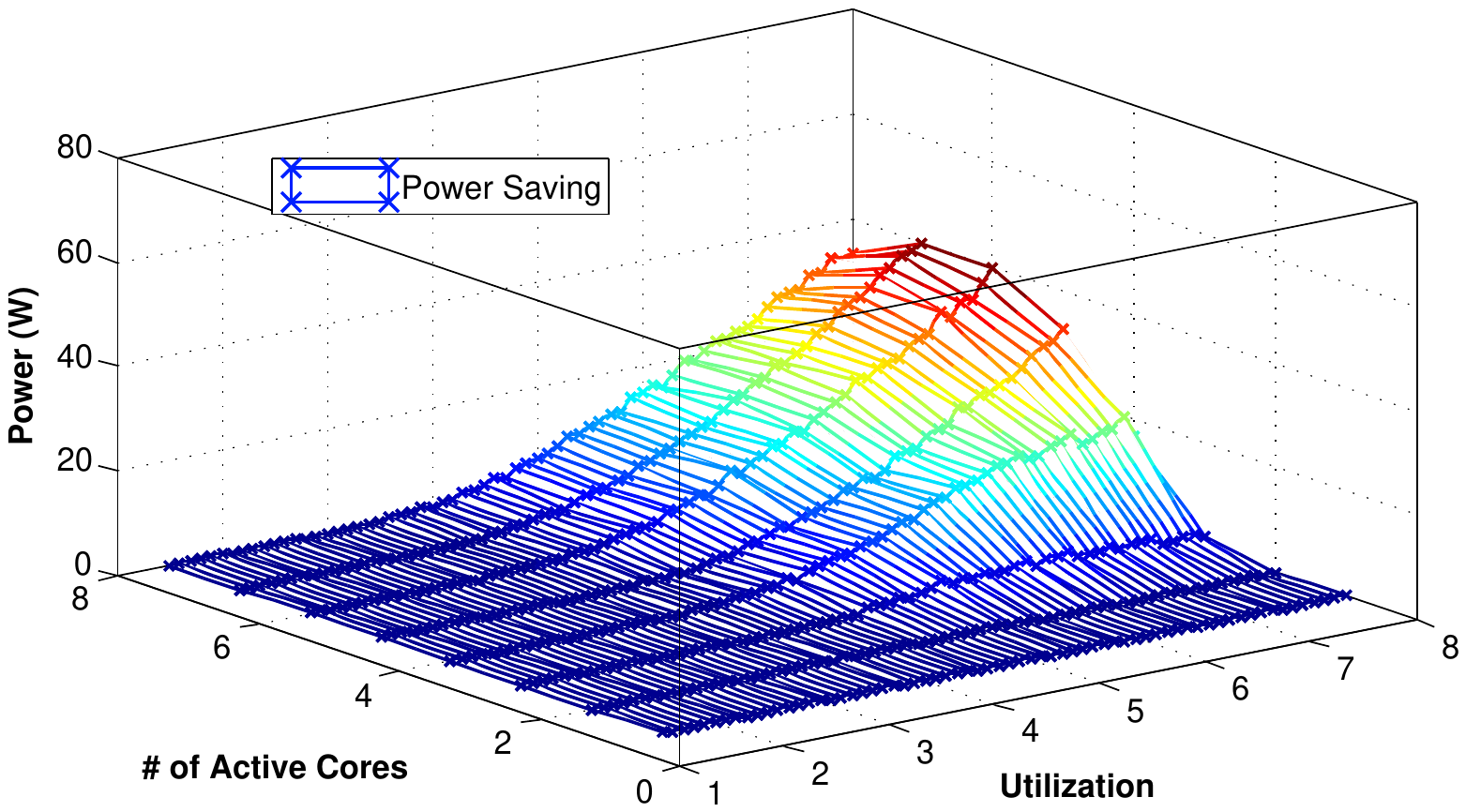}
\caption{Average power savings for non I/O Constrained workload (jetbench) when $U_{\max}=.8$}\label{fig:OPT_non_PRL_power_with_JB_U_max_point_8}
\end{center}
\end{figure}
\begin{figure}
\begin{center}
    \includegraphics[scale=.4, viewport= 30 300 575 600, clip]{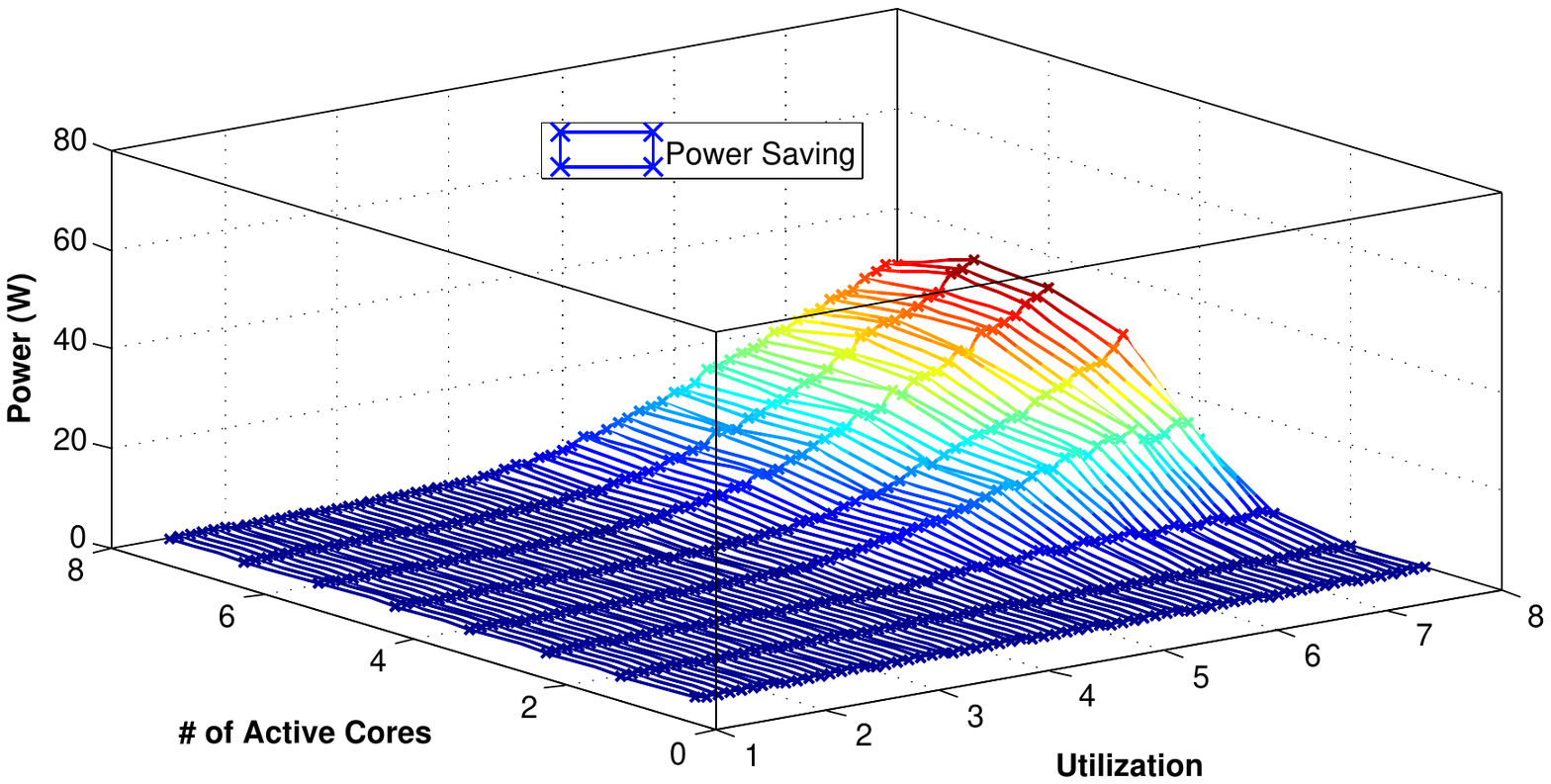}
\caption{Average power savings for non I/O Constrained workload (jetbench) when $U_{\max}=1.2$}\label{fig:OPT_non_PRL_power_with_JB_U_max_1_point_2}
\end{center}
\end{figure}

\begin{figure}
\begin{center}
    \includegraphics[scale=.4, viewport= 40 300 575 600, clip]{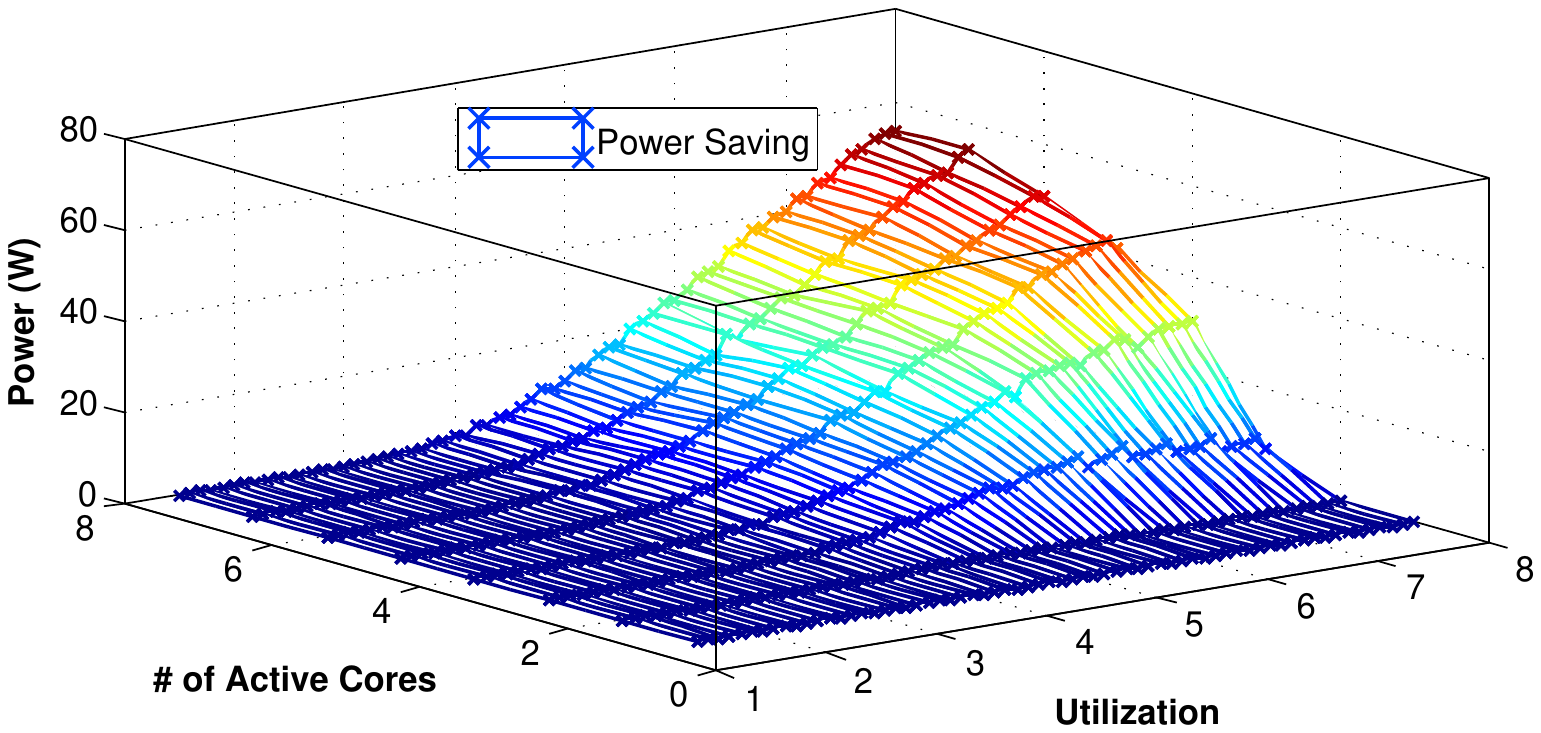}
\vspace{-.15in}\caption{Average power savings for I/O Constrained workload (gcc) when $U_{\max}=.4$}\label{fig:OPT_non_PRL_power_with_GCC_U_max_point_4}
\end{center}
\end{figure}


\begin{figure}
\begin{center}
    \includegraphics[scale=.4, viewport= 40 310 575 600, clip]{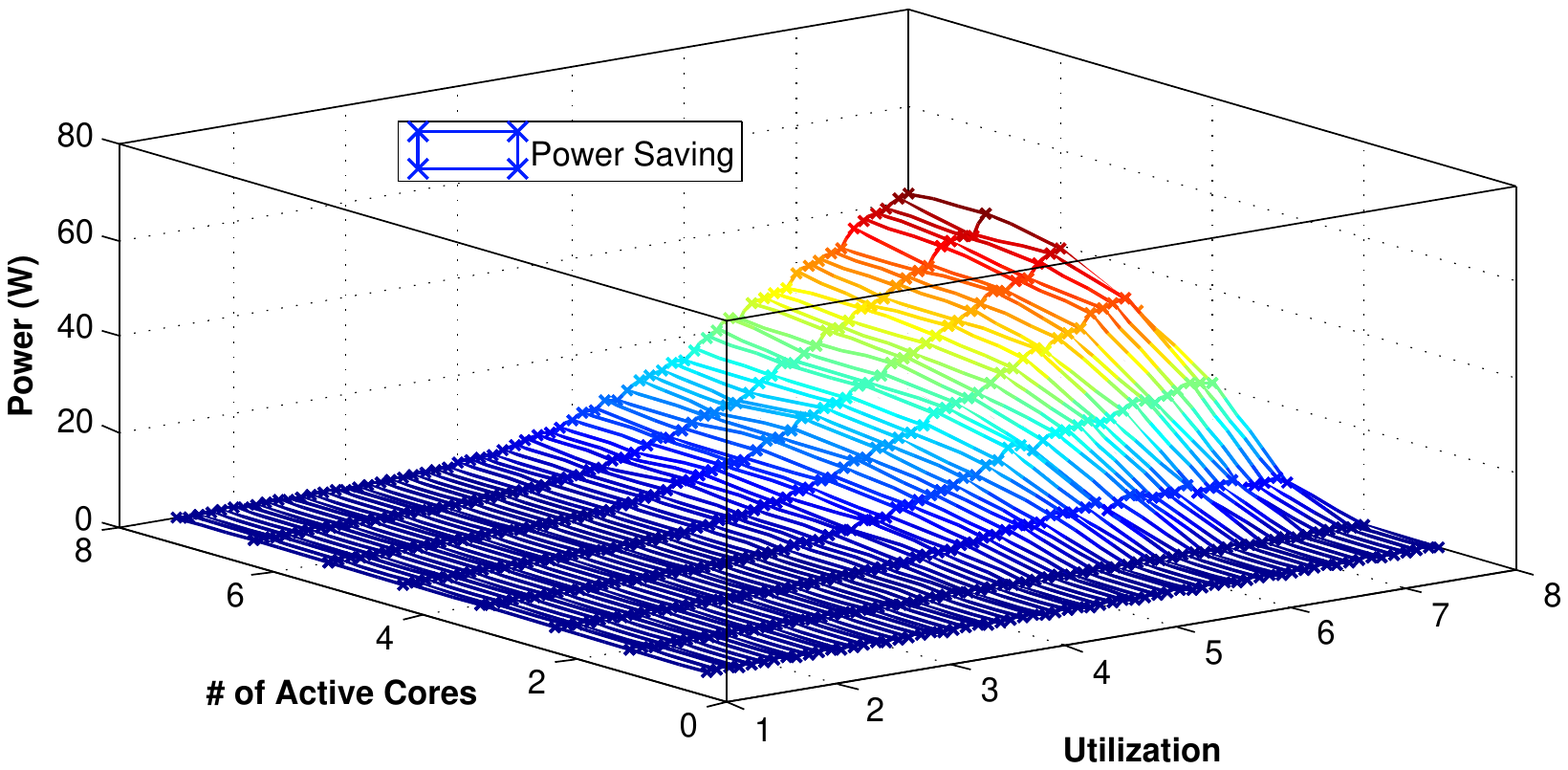}
\vspace{-.15in}\caption{Average power savings for I/O Constrained workload (gcc) when $U_{\max}=.8$}\label{fig:OPT_non_PRL_power_with_GCC_U_max_point_8}
\end{center}
\end{figure}


\begin{figure}
\begin{center}
    \includegraphics[scale=.4, viewport= 30 310 575 600, clip]{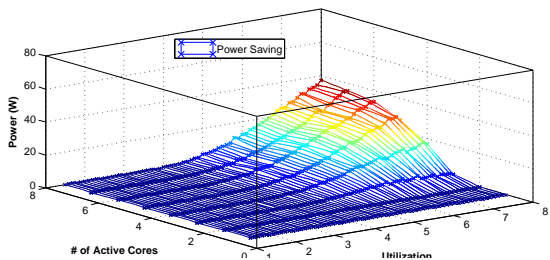}
\caption{Average power savings for I/O Constrained workload (gcc) when $U_{\max}=1.2$}\label{fig:OPT_non_PRL_power_with_GCC_U_max_1_point_2}
\end{center}
\end{figure}

\subsection{Results \& Discussion}\label{subsec:eval-results}

Figures~\ref{fig:OPT_non_PRL_power_with_JB_U_max_point_4},~\ref{fig:OPT_non_PRL_power_with_JB_U_max_point_8}, and~\ref{fig:OPT_non_PRL_power_with_JB_U_max_1_point_2} display the power savings obtained from simulating over the parallel/power parameters obtained from the Jetbench application.  Figures~\ref{fig:OPT_non_PRL_power_with_GCC_U_max_point_4},~\ref{fig:OPT_non_PRL_power_with_GCC_U_max_point_8}, and~\ref{fig:OPT_non_PRL_power_with_GCC_U_max_1_point_2} display the power savings for the GCC application.  The largest power savings is 60 watts (for GCC when $U_{\max} = .4$ and both the utilization and active cores equal eight) which is significant since from Figures~\ref{fig:jetbench-power-map} and~\ref{fig:gcc-power-map} the maximum power dissipation rate is around 80 watts.

From these plots, there are a few noticeable trends: 1) as $U_{\max}$ increases, the power savings decrease for both applications; the reason for this decrease is that larger utilization jobs require greater parallelization and thus more parallel overhead which reduces the power savings.  2) As the total utilization increases, the power savings increases (for active processors greater than two); in this case, the savings appears to be due to the fact that the power-dissipation rates are considerably higher at the highest core frequencies.  Thus, if our parallel algorithm can reduce the frequency over the non-parallel algorithm by a slight amount, there is significant power savings.  3) The power savings for both applications are similar; however, the I/O-constrained application, GCC, appears to have slightly higher power savings.  Again, the power-dissipation function for GCC may reward small frequency reductions slightly more than Jetbench's function.  Also, since we have a discrete set of frequencies, many of the different frequencies returned by Algorithm~\ref{alg:minfreq} will get mapped to the same core frequency reducing the differences for the two applications.

\section{Conclusions}\label{sec:conclusions}

In this paper, we explore the potential energy savings that could be obtained from exploiting parallelism present in a real-time application.  We consider the case of malleable Gang scheduled parallel jobs and design an optimal polynomial-time algorithm for determining the frequency to run each active core when we have the constraint of homogenous core frequencies.  Simulations with power data from an actual hardware testbed confirm the efficacy of our approach by providing significant power savings over the optimal non-parallel scheduling approach.  As real-time embedded systems are trending toward multicore architecture, our research suggests the potential in reducing the overall energy consumption of these devices by exploiting task-level parallelism.  In the future, we will extend our research to investigate power saving potential when the cores may execute at different frequencies and also incorporate thermal constraints into the problem.

\bibliographystyle{IEEEtran}
\bibliography{biblioEcrts13}
%

\end{document}